\newcommand{\adan}[1]{#1}
\title{Hidden variables \adan{simulating} quantum contextuality increasingly violate the Holevo bound}
\author{Ad\'an Cabello \inst{1} \and Joost J. Joosten \inst{2}}
\institute{
Departamento de F\'{\i}sica Aplicada II\\
Universidad de Sevilla, 41012 Sevilla, Spain\\
{\tt adan@us.es}\\
www.adancabello.com \\
\ \\
\and
Dept. L\`ogica, Hist\`oria i Filosofia
de la Ci\`encia \\  Universitat de Barcelona,
Montalegre 6,
08001 Barcelona, Spain\\
{\tt jjoosten@ub.edu}\\
www.phil.uu.nl/$\sim$jjoosten/}
\date{today} % delete this line to display the current date
\begin{document}

\maketitle

%%%%%%%%%%%%%%%%%%%%%%%%%%%%%%%%%%%%%%%%%%%%%%%%%%%%%%%%%%%%%%%%%%%%%%

\begin{abstract}
In this paper we approach some questions \adan{about quantum
contextuality} with tools from formal logic. In particular, we
consider an experiment associated with the Peres-Mermin square.
The language of all possible sequences of outcomes of the
experiment is classified in the Chomsky hierarchy and seen to
be a regular language.

\ \ \ \ \ Next, we make the rather evident observation that a
finite set of hidden finite valued variables can never account
for indeterminism in an ideally isolated repeatable experiment.
We see that, when the language of possible outcomes of the
experiment is regular, as is the case with the Peres-Mermin
square, the amount of binary-valued hidden variables needed to
de-randomize the model for all sequences of experiments up to
length $n$ grows as bad as it could be: linearly in $n$.

\ \ \ \ \ We introduce a very abstract model of machine that
simulates nature in a particular sense. A lower-bound on the
number of memory states of such machines is proved if they were
to simulate the experiment that corresponds to the Peres-Mermin
square. Moreover, the proof of this lower bound is seen to
scale to a certain generalization of the Peres-Mermin square.
For this scaled experiment it is seen that the Holevo bound is
violated and that the degree of violation increases uniformly.
\end{abstract}

%%%%%%%%%%%%%%%%%%%%%%%%%%%%%%%%%%%%%%%%%%%%%%%%%%%%%%%%%%%%%%%%%%%%%%

\section{Introduction}

%%%%%%%%%%%%%%%%%%%%%%%%%%%%%%%%%%%%%%%%%%%%%%%%%%%%%%%%%%%%%%%%%%%%%%

In this paper we will focus on an experiment that is associated
to the famous Peres-Mermin square \cite{Mermin,Peres}.
\adan{The experiment consists of a sequence of measurements
performed consecutively on a two-qubit system. All the
measurements are randomly chosen from a subset of those
represented by two-fold tensor products of the Pauli matrices
$X$, $Y$ and $Z$, and the identity $\mathbb{I}$.} The set
$\mathcal L$ of all sequences of \adan{outcomes consistent with
Quantum Mechanics} is studied as a formal language.

In the theory of formal languages, the Chomsky hierarchy
\cite{Chomsky1956,Sipser1997} defines a classification of
languages according to their level of complexity. In
Section~\ref{Section:ExperimentAndLanguage}, this language
$\mathcal L$ will be classified in the Chomsky hierarchy. It
will be seen to live in the lower regions of the hierarchy.
More concrete, it will be seen that the language is of Type 3,
also called \emph{regular}.

In
\adan{Subsection~\ref{subSection:HiddenVariablesAndDerandomization}},
the rather evident observation is made that a finite set of
hidden finite valued variables can never account for
indeterminism in an ideally isolated repeatable experiment.

Finally, in \adan{Subsection}~\ref{SectionHolevoBound}, the
question is addressed how much memory is needed to simulate
\adan{Quantum Mechanics in experiments with sequential
measurements}. The question naturally arises\adan{:} \emph{how}
are we allowed to simulate nature. We wish to refrain from
technical implementation details of these simulations as much
as possible. To this extent, we invoke the Church-Turing thesis
that captures and mathematically defines the intuitive notion
of what is computable at all by what mechanized and controlled
means so-ever. This gives rise to our notion of MAGAs:
Memory-factored Abstract Generating Automata. We prove a lower
bound for the amount of memory needed for MAGAs that simulate
extensions of the \adan{experiment associated to the
Peres-Mermin square} and shall see that this bound directly and
increasingly so violates the Holevo bound.

%%%%%%%%%%%%%%%%%%%%%%%%%%%%%%%%%%%%%%%%%%%%%%%%%%%%%%%%%%%%%%%%%%%%%%

\section{Language defined by the experiment}\label{Section:ExperimentAndLanguage}

%%%%%%%%%%%%%%%%%%%%%%%%%%%%%%%%%%%%%%%%%%%%%%%%%%%%%%%%%%%%%%%%%%%%%%

In this paper we will denote the Peres-Mermin square by the
following matrix
\[ \left( \begin{array}{ccc}
A & B & C \\
a & b & c \\
\alpha & \beta & \gamma \end{array} \right) \ \ \ \ (\dag),
\]
where these variables can get assigned values in $\{ 1, -1\}$.
The corresponding --classically impossible to satisfy--
restriction is that the product of any row or column should be
1 except for $C,c,\gamma$ which should multiply to $-1$.

%\adan{THIS IS WRONG, REMOVE IT: The Peres-Mermin square is the
%smallest (\cite{GisinEtAL2007}) set of equations in which
%quantum contextuality is manifested.}

Basically, contextuality amounts to the phenomenon that the
\adan{outcome of a measurement on a system relates to and
depends on other (compatible) measurements performed on that
system.}

In the next Subsection~\ref{Subsection:TheExperiment}, we will
first formally describe the possible outcomes of the experiment
that corresponds to the Peres-Mermin square. We will describe
this in almost tedious detail as we later need to formalize the
corresponding language.

%%%%%%%%%%%%%%%%%%%%%%%%%%%%%%%%%%%%%%%%%%%%%%%%%%%%%%%%%%%%%%%%%%%%%%

\subsection{The experiment} \label{Subsection:TheExperiment}

%%%%%%%%%%%%%%%%%%%%%%%%%%%%%%%%%%%%%%%%%%%%%%%%%%%%%%%%%%%%%%%%%%%%%%

We will collect the nine observables of our experiment into an
alphabet $\Sigma$ which we denote by
\[
\Sigma \ \ := \ \ \{ A, B, C, a , b, c, \alpha, \beta, \gamma\}.
\]
The experiment consists of arbitrarily many discrete
consecutive measurements of these nine observables which can
take values in the two-element set $\{ -1,1\}$. For reference
we reiterate that $(\dag)$ in this paper coincides with the
well-studied Peres-Mermin square \cite{Mermin,Peres}.

\begin{definition}[Context; Compatible observables]
The rows and columns of matrix $(\dag)$ are called
\emph{contexts}. Two observables within the same context are
called \emph{compatible} and two observables that do not share
a common context are called \emph{incompatible}.
\end{definition}

It is clear that each observable belongs to exactly two
contexts. Likewise, each observable is compatible with 4 other
observables and incompatible with 4 yet other observables. Now,
let us define what it means for an observable to be
\emph{determined}.

\begin{definition}[Determined observables; Value of a determined observable]\label{definition:experimentNineElements}
An observable becomes (or stays) {\bf determined} if:
\begin{itemize}
\item[(E1)] Once we measure an observable, it becomes (or
    stays) determined and its value is the value that is
    measured, either 1 or $-1$. If the observable $X$ that
    was measured was already determined, then the value of
    $X$ that is measured anew must be the same as in the
    most recent measurement.
\item[(E2)] If two observables within one context are
    determined, and if the third observable in this context
    was not yet determined, this third value becomes
    determined too. Its corresponding value is such that
    the product of the three determined values  in this
    context equals 1. The sole exception to this value
    assignment is the context $\{ C,c , \gamma \}$ that
    should multiply to $-1$.
\end{itemize}
The notion of an observable being \emph{undetermined} is defined by the following clause:
\begin{itemize}
\item[(D1)] By default, an observable is {\bf undetermined}
    and only becomes determined in virtue of (E1) or (E2).
    An observable $X$ that is determined remains determined
    {\bf if and only if} all successive measurements are in
    one of the two contexts of $X$ that is, all successive
    measurements are compatible with $X$. As soon as,
    according to this criterion, an observable is no longer
    determined we say that it has become {\bf
    undetermined}. Undetermined observables stay
    undetermined until they become determined. Undetermined
    observables have no value assigned. Sometimes we will
    say that the value of an undetermined observable is
    \emph{undefined}.
\end{itemize}
\end{definition}

A sequence of measurements is \emph{consistent} with our
experiment if its determined observables meet the restrictions
above. In essence, part of our definition is of inductive
nature. To see how this works, let us see this, by way of
example, the sequence of measurements $[A=1; B=1; c=1; \gamma
=1 ]$ is inconsistent with our experiment:

\medskip

\begin{tabular}{l|l|p{7cm}}
Measured & Measured\hspace*{5mm} & Comments\\
observable\hspace*{5mm} & value & \\
\hline
$A$ & 1 & The experiments starts so by default, all observables were undetermined (D1). After the measurement, by (E1) the observable $A$ is determined and assigned the value 1.\\
\hline
$B$ & 1 & As $B$ is a new measurement, the observable becomes determined (E1) with value 1. The observable $A$ remains determined by $(D1)$. Moreover, the observable $C$ which is in the context $\{A,B , C\}$, now becomes determined in virtue of (E2) with value 1 too as the product $A\cdot B \cdot C$ should multiply to 1.\\
\hline
$c$ & 1 & By (E1), $c$ becomes determined with value 1 and the observables $A$ and $B$ become undetermined in virtue of $(D1)$. The observable $C$ remains determined by (D1) with value 1 as the new measurement of $c$ is in the context $\{ C, c, \gamma \}$. Thus, in virtue of (E2), the observable $\gamma$ becomes determined too. Its value must be $-1$ as $C\cdot c \cdot \gamma$ should multiply to $-1$.\\
\hline
$\gamma$ & 1 & As we said before, in virtue of $(E2)$ the value of $\gamma$ should be $-1$. Thus this is inconsistent with the measurement of 1 for $\gamma$.\\
\end{tabular}\ \\
\ \\
Note that only the last measurement was inconsistent with our experiment.

\begin{remark}
If we fill the square $(\dag)$ with any assignment of $1$s and
$-1$s, then the number of products of contexts that equal $-1$
will always be even.
\end{remark}

\begin{proof}
By induction on the number of $-1$s. If all observables in
$(\dag)$ are set to one, then all contexts multiply to one thus
yielding zero --an even number-- of negative products. If we
add one more $-1$ to $(\dag)$, this $-1$ will occur in exactly
two contexts thereby flipping the sign of the respective
products of these two contexts.
\end{proof}

%%%%%%%%%%%%%%%%%%%%%%%%%%%%%%%%%%%%%%%%%%%%%%%%%%%%%%%%%%%%%%%%%%%%%%

\subsection{The formal language}

%%%%%%%%%%%%%%%%%%%%%%%%%%%%%%%%%%%%%%%%%%%%%%%%%%%%%%%%%%%%%%%%%%%%%%

In this subsection we shall specify a formal language
comprising exactly the possible strings of measurements of our
above specified experiment. We mention where the complexity of
this language resides in the Chomsky hierarchy: the regular
languages. Let us first briefly introduce some terms and
definitions from the theory of formal languages.

We shall call a collection of symbols an \emph{alphabet} and
commonly denote this by $\Sigma$. A \emph{string} or \emph{word} over an
alphabet is any finite sequence of elements of $\Sigma$ in
whatever order. We call the sequence of length zero the
\emph{empty string}, and will denote the empty string/word by
$\lambda$. We will denote the set of all strings over $\Sigma$
by $\Sigma^*$ using the so-called Kleene-star. Thus, formally
and without recurring to the notion of sequence, we can define
$\Sigma^*$, the set of all finite strings over the alphabet $\Sigma$
as follows.
\[
\begin{array}{c}
\lambda \in \Sigma^*;\\
\sigma \in \Sigma^* \ \& \ s\in \Sigma \ \Rightarrow \ \sigma s \in \Sigma^*.
\end{array}
\]
Instead of writing $\lambda \sigma$, we shall just write
$\sigma$. It is clear that $\Sigma^*$ is an inductive
definition so that we also have an induction principle to prove
or define properties over $\Sigma^*$. For example, we can now
formally define what it means to \emph{concatenate} -- stick
the one after the other-- two strings: We define $\star$ to be
the binary operation on $\Sigma^*$ by $\sigma \star \lambda =
\sigma$ and $\sigma \star (\tau s) = (\sigma \star \tau) s$.
Any subset of $\Sigma^*$ is called a \emph{language} over
$\Sigma$.

The study of formal languages concerns, among others, which
kind of grammars define which kind of languages, and by what
kind of machines these languages are recognized. In the current
paper we only need to provide a formal definition of so-called
\emph{regular} languages. We do this by employing regular
grammars. Basically such a grammar is a set of rules that tells
you how strings in the language can be generated.

\begin{definition}[Regular Grammar]
A regular grammar over an alphabet $\Sigma$ consist of a set
$\mathcal{G}$ of \emph{generating symbols} together with a set
of \emph{rules}. In this paper we shall refer to generating
symbols by using a line over the symbols. The generating
symbols always contain the special start-symbol $\overline{S}$.
Rules are of the form
\[
\begin{array}{ll}
\overline{X} \to \lambda & \mbox{or}\\
\overline{X} \to s \overline{Y},
\end{array}
\]
where $\overline{X}, \overline{Y} \in \mathcal{G}$ and $s\in
\Sigma$. The only restriction on the rules is, that there must
be at least one rule where the left-hand side is
$\overline{S}$.
\end{definition}

Informally, we state that a \emph{derivation} in a grammar is
given by repeatedly applying possible rules starting with
$\overline{S}$, where the rules can be applied within a
context. Thus, for example, when we apply the rule
$\overline{A} \to a \overline{B}$ in the context $\sigma A$, we
obtain $\sigma a \overline{B}$. A more detailed example of a
derivation is given immediately after
Definition~\ref{definition:regularLanguageNineElements}. We say
that a string $\sigma$ over $\Sigma$ is derivable within a
certain grammar if there is a derivation resulting in $\sigma$.
The \emph{language defined by a grammar} is the set of
derivable strings over $\Sigma^*$. A language is called
\emph{regular} if it is definable by a regular grammar.

We are now ready to give a definition of a regular language
that, as we shall see, exactly captures the outcomes of our
experiment. To this end, we must resort to a richer language
than just $\Sigma$ as $\Sigma$ only comprises the observables
and says nothing over the outcome\adan{s}. So, we shall
consider a language where, for example, $\tilde{A}$ will stand
for, ``$A$ was measured with value $-1$'', and $A$ will stand
for, ``$A$ was measured with value $1$''. We will denote this
alphabet by $\tilde \Sigma$. We will use the words
\emph{compatible} and \emph{incompatible} in a similar fashion
for $\tilde \Sigma$ as we did for our observables in $\Sigma$.
Thus, for example, we say that both $B$ and $\tilde B$ are
compatible with $C$. The only difference will be that $\tilde
A$ is not compatible with $A$ whereas $A$ is.

\begin{definition}[A grammar for $\mathcal{L}$]\label{definition:regularLanguageNineElements}
The language $\mathcal{L}$ will be a language over the alphabet
$\tilde \Sigma := \{  A, \tilde{A}, B, \tilde{B}, \ldots,
\beta, \tilde{\beta}, \gamma, \tilde{\gamma}  \}$, where the
intended reading of $A$ will be that the observable $A$ was
measured to be $1$, and $\tilde{A}$ will stand for measuring
$-1$, etc.

Before we specify the grammar that will generate $\mathcal{L}$,
we first need some notational conventions. In the sequel, $U,
V, X, Y$ and $Z$ will stand for possible elements of our
alphabet. If $X$ and $Y$ are compatible symbols, we will denote
by $Z(XY)$ the unique symbol that is  determined in (E2) by $X$
and $Y$. Thus, for example, $Z(A\tilde{B})= \tilde{C}$ and
$Z(C\gamma)=\tilde{c}$.

The generating symbols of the grammar will be denoted by a
string with a line over it. Note that, for example, $\overline{
XY}$ is regarded as one single generating symbol. The intended
reading of such a string is that the two symbols are different
and compatible, the last symbol is the one that can be
generated next, and the remainder of the string codifies the
relevant history. As usual we will denote the initial
generating symbol by $\overline{S}$. Let $\mathcal{L}$ be the
formal language generated by the following grammar.
\[
\begin{array}{llll}
\overline{S} &\ \ \ \  \longrightarrow \ \ \ \  & \lambda & \mbox{\ \ \ \ \ \ \ \ \   }\\

\overline{S} &\ \ \ \  \longrightarrow \ \ \ \  & \overline{X} & \mbox{\ \ \ \ \ \ \ \ \  for any symbol } X \in \Sigma \\

\ & \ & \ & \ \\
\overline{X} &\ \ \ \  \longrightarrow \ \ \ \  & X &  \\

\overline{X} &\ \ \ \  \longrightarrow \ \ \ \  & X\overline{ X} &   \\

\overline{X} &\ \ \ \  \longrightarrow \ \ \ \  & X\overline{ Z } & \mbox{\ \ \ \ \ \ \ \ \  for $Z$ incompatible with } X  \\

\overline{X} &\ \ \ \  \longrightarrow \ \ \ \  & X\overline{ XY} & \mbox{\ \ \ \ \ \ \ \ \  for $Y$ compatible with $X$ (but not equal)}   \\

\ & \ & \ & \ \\

\overline{XY} &\ \ \ \  \longrightarrow \ \ \ \  & Y &  \\

\overline{XY} &\ \ \ \  \longrightarrow \ \ \ \  & Y\overline{ XY} & \\

\overline{XY} &\ \ \ \  \longrightarrow \ \ \ \  & Y\overline{YX} & \\

%\overline{XY} &\ \ \ \  \longrightarrow \ \ \ \  & Y \overline{ YZ} & \mbox{\ \ \ \ \ \ \ \ \  for $Z$ compatible with $Y$ (not equal), but not with } X  \\
\overline{XY} &\ \ \ \  \longrightarrow \ \ \ \  & Y \overline{ YZ} & \mbox{\ \ \ \ \ \ \  for $Z$ compatible with $Y$ (not equal),}\\

\ & \ & \ & \ \mbox{\ \ \ \ \ \ but not with } X \\

\overline{XY} &\ \ \ \  \longrightarrow \ \ \ \  & Y\overline{YZ(XY)} & %\mbox{\ \ \ \ \ \ \ \ \  for $Z$ compatible both with $X$ and $Y$ (but not equal)}
\\

\overline{XY} &\ \ \ \  \longrightarrow \ \ \ \  & Y\overline{Z(XY)U} & \mbox{\ \ \ \ \ \ \ \ \  for $U$ compatible with $Z(XY)$ (but not equal)}\\

\ & \ & \ & \ \mbox{\ \ \ \ \ \ \ \ but not compatible with $X$ or } Y\\

\end{array}
\]
\end{definition}
We emphasize that the conditions on the right are not part of the rules. Rather, they indicate how many rules of this type are included in the grammar. For example, $\overline{S}  \longrightarrow    \overline{X}  \mbox{\ for any symbol } X {\in} \Sigma$
is our short-hand notation for nine rules of this kind.

Let us give an example of how this grammar works to the effect
that $ABc\tilde{\gamma}$ is in our language. Recall our reading
convention that says that $A$ stands for measuring $A=1$, $B$
for measuring $B=1$, $c$ for $c=1$, and $\tilde{\gamma}$ for
measuring $\gamma = -1$. Here goes a derivation of the string
$ABc\tilde{\gamma}$:
\[
\begin{array}{c|l|l}
\mbox{{\bf String derived\ \ }} & \mbox{{\bf Instantiation of rule\ \ }} & \mbox{{\bf General rule applied}}\\
\hline
\vspace{-0.4cm }
\ & \ & \ \\
\overline{A} & \overline{S} \longrightarrow \overline{A} &\mbox{By the rule $\overline{S} \longrightarrow \overline{X} $ with $X= A$} \\

\hline
\vspace{-0.4cm }
\ & \ & \ \\
A\overline{AB} & \overline{A} \longrightarrow A\overline{AB}  &\mbox{By the rule $\overline{X} \longrightarrow X\overline{XY}$ ($X$ and $Y$ compatible)}\\
\ & \ & \mbox{with $X= A$, $Y= B$. Note that $A$ and $B$ are}\\
\ & \ & \mbox{indeed compatible.} \\

\hline
\vspace{-0.4cm }
\ & \ & \ \\

AB\overline{Cc} & \overline{AB} \longrightarrow B\overline{Cc}  &\mbox{By the rule $\overline{XY} \longrightarrow  Y\overline{Z(XY)U}$  for $U$ compatible}\\
\ &\ &\mbox{with $Z(XY)$ (but not equal), but not compatible}\\
\ &\ &\mbox{with $X$ or $Y$, where $X= A$, $Y= B$, $Z(XY) = C$} \\
\ &\ &\mbox{and $U = c$.} \\
\hline
\vspace{-0.4cm }
\ & \ & \ \\
ABc\overline{c\tilde{\gamma}} & \overline{Cc} \longrightarrow c\overline{c\tilde{\gamma}}  &\mbox{By the rule $\overline{XY} \longrightarrow Y\overline{YZ(XY)}$ }\\
\ & \ & \mbox{with $X=C$, $Y=c$ and $Z=\tilde{\gamma}$.}\\
\hline
\vspace{-0.4cm }
\ & \ & \ \\
ABc\tilde{\gamma} & \overline{c\tilde{\gamma}} \longrightarrow \tilde{\gamma} &\mbox{By the rule $\overline{XY} \longrightarrow Y$ with $X=c$ and $Y=\tilde{\gamma}$.} \\

\end{array}
\]
In a previous example in
Subsection~\ref{Subsection:TheExperiment}, we showed that the
string $ABC\gamma \in \tilde \Sigma$ is not consistent with the
experiment. It is not hard to prove that in our grammar there
is no derivation of this string either.

We note and observe that the grammar has various desirable
properties. As such, the grammar is monotone\footnote{That is,
the length of each subsequent string in a derivation is at
least as long as the length of the previous one.}, where each
generated string contains at most one generating symbol.
Moreover, it is easily seen that once a string generated by the
grammar contains a composite generating symbol (like
$\overline{XY}$), each subsequently generated string will also
contain a composite generating symbol if it contains any
generating symbols at all. Indeed, the grammar is very simple.

\begin{theorem}\label{Theorem:GrammarAndExperimentCoincide}
The language $\mathcal{L}$ as defined in
Definition~\ref{definition:regularLanguageNineElements}
coincides with the set of consistent measurements as defined in
Definition~\ref{definition:experimentNineElements}.
\end{theorem}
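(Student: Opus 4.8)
The plan is to prove the two inclusions separately, both by induction, after first fixing a bookkeeping device that records the ``state'' of the experiment after a prefix of measurements. For a string $\sigma\in\tilde\Sigma^*$ I would define its \emph{configuration} $\mathrm{cfg}(\sigma)$ to be the partial map assigning to each observable its determined value (or ``undefined''), obtained by processing $\sigma$ letter by letter through clauses (E1), (E2) and (D1); I call $\sigma$ \emph{consistent} exactly when no letter ever clashes with an already-determined value. The first structural fact to isolate is a \emph{configuration lemma}: for every consistent $\sigma$, the set of determined observables is either empty, a single observable, or the three observables of one context with values obeying the $(\dag)$-constraint. This is where the arithmetic of the Peres-Mermin square enters—each observable lies in exactly two contexts and is compatible with precisely four others—and it is what makes a finite-state (regular) description possible at all, since it bounds the reachable configurations.

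Next I would make precise the correspondence suggested by the intended reading of the generating symbols, attaching to each an invariant $I$ that relates it to $\mathrm{cfg}$ of the terminal prefix already produced. A derivation of $\sigma\,\overline{X}$ should guarantee that $\sigma$ is consistent and that $\mathrm{cfg}(\sigma)$ is empty or a singleton with $X$ the observable committed to next; a derivation of $\sigma\,\overline{XY}$ should guarantee that $\sigma$ is consistent, that $\mathrm{cfg}(\sigma)$ is a context containing $X$ as a surviving member, and that $Y$ (compatible with $X$) is the observable to be produced next, so that $Z(XY)$ is recoverable from $X$ and $Y$ alone. The monotonicity of the grammar and the fact that composite symbols persist—both already noted in the text—let me treat the single-symbol ``prefix phase'' and the composite ``context phase'' as two regimes of this one invariant.

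For the soundness inclusion (every string of $\mathcal{L}$ is consistent) I would induct on the length of a derivation, checking for each rule schema that applying it to a string satisfying $I$ again yields a string satisfying $I$, and that the terminating rules $\overline{X}\to X$ and $\overline{XY}\to Y$ leave a consistent fully-terminal string. Each check is a direct application of (E1), (E2), (D1): for instance $\overline{XY}\to Y\overline{Z(XY)U}$ corresponds exactly to measuring $Y$, entering the context $\{X,Y,Z(XY)\}$, and then measuring a $U$ compatible only with $Z(XY)$, so that $X$ and $Y$ drop out and $Z(XY)$ survives.

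The completeness inclusion is the part I expect to be the real obstacle, and I would prove it by induction on $|\sigma|$: assuming a derivation of $\sigma$ ending in a generating symbol that satisfies $I$, I must exhibit a rule that outputs the next letter $s$ of $\sigma s$ and lands in a generating symbol again satisfying $I$. The delicate point is to show the rule set is \emph{exhaustive}: every way a consistent measurement can follow a given configuration must be matched by a schema. Concretely, when $\mathrm{cfg}(\sigma)$ is the context $\{X,Y,Z(XY)\}$ one classifies the next observable as (i) lying in the context, or (ii) compatible with exactly one of its three members, and then verifies that a rule realizes each case with the correct surviving member recorded. Verifying that no consistent transition is omitted—in particular that the encoding of a context among its several equivalent representations $\overline{XY},\overline{YX},\overline{YZ(XY)},\dots$ always permits the required continuation—is the crux on which the whole equivalence rests, and I would spend the bulk of the argument on this exhaustive case analysis before combining the two inclusions to conclude equality.
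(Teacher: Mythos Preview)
Your plan is essentially the paper's own argument, only spelled out more carefully: the paper also proves soundness by induction on the length of a derivation (using a ``last symbol'' map $l(\Xi)$ in place of your invariant $I$) and completeness by induction on the length of the measurement sequence, and it leans on the same structural observation you isolate as the configuration lemma---that once a full context is determined it remains so, which is exactly why composite generating symbols persist. Your explicit bookkeeping via $\mathrm{cfg}$ and $I$ is a tidier packaging of what the paper leaves informal.

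One small wrinkle to fix when you write it out: your stated invariant for $\sigma\,\overline{XY}$, namely that $\mathrm{cfg}(\sigma)$ is already a full context containing $X$, fails at the moment the composite symbol is first introduced by the rule $\overline{X}\to X\,\overline{XY}$. At that point $\sigma$ ends in the single measurement $X$ and $\mathrm{cfg}(\sigma)=\{X\}$ is a singleton, not a triple; the full context $\{X,Y,Z(XY)\}$ only materialises \emph{after} $Y$ is produced. So the invariant should say merely that $X$ is determined in $\mathrm{cfg}(\sigma)$ with the value encoded by the symbol, and that any other determined observables are compatible with $X$ (equivalently: $\mathrm{cfg}(\sigma)$ is $\{X\}$ or a full context through $X$). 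With that adjustment your inductive checks go through, and the exhaustiveness analysis you identify as the crux is indeed where the work lies, though the paper dispatches it in a sentence.
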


\begin{proof}
We must show that, on the one hand any string in $\mathcal{L}$
is consistent with the experiment, and on the other hand, any
sequence of measurements consistent with the experiment is
derivable in $\mathcal{L}$.

The first implication is proven by an induction on the length
of the sequence of measurements. The fact that not measuring at
all is consistent is reflected by $\lambda \in \mathcal{L}$.
For non-empty sequence we distinguish between a context being
determined or not. The first case is covered by rules of the
form $\overline{XY} \to \mbox{righthand side}$. Note that in
the experiment, once there is a context fully determined, in
any future measurement there will be a (possibly different)
context that is fully determined. This is reflected in the
grammar in the sense, that once a composite generating symbol
(of the form $\overline{XY}$) enters the derivation, in all
subsequent derivations all generating symbols are composite. In
this sense, the composite generating symbols correspond exactly
to the case where there is a context fully determined. Note
that for any consistent new measurement there is a
corresponding rule (righthand side).
The second case is easier and corresponds to all rules of the form $\overline{X} \to \mbox{righthand side}$.\ \\
\medskip

A proof of the second implication proceeds by induction on the
length of a derivation. We first note that any non-empty string in
$\tilde \Sigma^*$ that is derivable in $\mathcal{L}$ is of the
form $\sigma \star \Xi$ where $\sigma \in \tilde \Sigma^*$ and $\Xi
\in \mathcal{G}$. Although each $\Xi \in \mathcal{G}$ is a
single separate symbol we already suggestively had composite
notations like $\overline{XY}$ for them. We define $l(\Xi)$ to
be ``the last symbol'' of $\Xi$ so that $l(\overline S)=
\lambda$, $l(\overline{X})= X$, and $l(\overline{XY})=Y$. We
now can prove by an easy induction on the length of a
derivation in $\mathcal{L}$, that for any string $\sigma \Xi$
that is derivable, the corresponding string of measurements
$\sigma l(\Xi)$ is consistent with the experiment. Again we use
here the distinction between composite generating symbols and
non-composite generating symbols and their correspondence to a
context being fully determined or not.
\end{proof}

Note that by the mere syntactic properties of the definition of
$\mathcal{L}$ we see that $\mathcal{L}$ is indeed a regular
language.
\begin{corollary}\label{Corollary:PMisRegular}
The set of consistent sequences of measurements of the
Peres-Mermin experiment is a regular language.
\end{corollary}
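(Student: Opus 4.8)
The plan is to deduce regularity of the experiment's language from the grammar we already have in hand. By Theorem~\ref{Theorem:GrammarAndExperimentCoincide} the set of consistent measurement sequences is exactly $\mathcal{L}$, the language generated by the grammar of Definition~\ref{definition:regularLanguageNineElements}; since regularity is a property of the language and not of any particular description of it, it suffices to show that this grammar generates a regular language.

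The core observation is obtained by scanning the production rules one by one. Every rule has a right-hand side consisting of a (possibly empty) block of terminal symbols from $\tilde\Sigma$ followed by \emph{at most one} trailing generating symbol: the rules $\overline{X} \to X\overline{X}$, $\overline{X} \to X\overline{Z}$, $\overline{X} \to X\overline{XY}$, and all the composite rules $\overline{XY} \to Y\overline{\cdots}$ emit a single terminal and carry one generating symbol, while $\overline{S} \to \lambda$ emits nothing. Thus the grammar is right-linear, and right-linear grammars are well known to generate precisely the regular languages.

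The only real obstacle is that, taken literally, the grammar does not sit in the strict normal form fixed by our definition of a regular grammar, which admits only the shapes $\overline{X} \to \lambda$ and $\overline{X} \to s\overline{Y}$. Three families of rules deviate: the unit production $\overline{S} \to \overline{X}$ carries no terminal, and the two terminal-only rules $\overline{X} \to X$ and $\overline{XY} \to Y$ carry no trailing generating symbol. I would remove these by the standard language-preserving transformations. First, adjoin one fresh generating symbol $\overline{E}$ with the single rule $\overline{E} \to \lambda$, and rewrite $\overline{X} \to X$ as $\overline{X} \to X\overline{E}$ and $\overline{XY} \to Y$ as $\overline{XY} \to Y\overline{E}$. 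Second, eliminate the unit production by substitution: for each rule $\overline{X} \to \rho$ present in the (already rewritten) grammar, add the rule $\overline{S} \to \rho$, and then delete $\overline{S} \to \overline{X}$. A routine check shows that neither step changes the derivable strings, so the resulting grammar generates $\mathcal{L}$ and every one of its rules now has the required form. Hence $\mathcal{L}$, and with it the set of consistent Peres-Mermin measurement sequences, is regular.
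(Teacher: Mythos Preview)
Your argument is correct and follows essentially the same route as the paper, which simply remarks that ``by the mere syntactic properties of the definition of $\mathcal{L}$'' the language is regular and then invokes Theorem~\ref{Theorem:GrammarAndExperimentCoincide}. In fact you are more careful than the paper: you notice that the grammar of Definition~\ref{definition:regularLanguageNineElements} does not literally match the strict normal form fixed in the paper's own definition of a regular grammar (because of the unit production $\overline{S}\to\overline{X}$ and the terminal-only rules $\overline{X}\to X$, $\overline{XY}\to Y$), and you supply the standard language-preserving rewriting to bring it into that form---a point the paper passes over in silence.
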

Regular languages are of Type 3 in the Chomsky hierarchy. As
such we have access to a corpus of existing theory. In
particular, there exists a method to determine the minimal
amount of states in a Deterministic Finite State Automata that
will accept $\mathcal{L}$. Moreover, we also have access to the
following proposition \cite{Flajolet2009}.

\begin{proposition}\label{Proposition:CountingNumberOfStrings}
Let $\mathcal{L}$ be a regular language. Then, there exist
polynomials $p_1,\ldots, p_k$ and constants $\lambda_1, \ldots,
\lambda_k$ such that the number of strings of length $n$ in
$\mathcal{L}$ is given by
\[
p_1(n) \lambda_1^n + \ldots + p_k(n) \lambda_k^n.
\]
\end{proposition}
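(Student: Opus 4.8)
The plan is to pass from the language to a finite automaton and then to read off the counting function from the powers of its transfer matrix. Since $\mathcal{L}$ is regular, I would first fix a deterministic finite automaton $M = (Q, \Sigma, \delta, q_0, F)$ that accepts $\mathcal{L}$; the existence of such an $M$ is the standard equivalence between regular grammars and finite automata. Writing $a_n$ for the number of strings of length $n$ in $\mathcal{L}$, the goal is to produce the claimed closed form for $a_n$.

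Next I would introduce the \emph{transfer matrix} $T \in \mathbb{N}^{Q\times Q}$ defined by $T_{q,q'} = |\{ s \in \Sigma : \delta(q,s)=q'\}|$, the number of letters that drive $M$ directly from state $q$ to state $q'$. A straightforward induction on $n$ shows that $(T^n)_{q,q'}$ counts the strings of length $n$ on which $M$ moves from $q$ to $q'$; since $M$ is deterministic each accepted string corresponds to exactly one such run, whence
\[
a_n = \mathbf{i}^{\top} T^n \mathbf{f},
\]
where $\mathbf{i}$ is the indicator (row) vector of the start state $q_0$ and $\mathbf{f}$ the indicator (column) vector of the accepting set $F$. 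This reduces the counting problem to understanding the entries of the matrix powers $T^n$.

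For the last step I would extract the closed form in either of two equivalent ways. Via generating functions, I would form $A(z) = \sum_{n\geq 0} a_n z^n = \mathbf{i}^{\top}(I - zT)^{-1}\mathbf{f}$; by Cramer's rule every entry of $(I-zT)^{-1}$ is a quotient of determinants, so $A(z)=P(z)/Q(z)$ is rational with $Q(z)=\det(I-zT)$. A partial-fraction decomposition then writes $A$ as a sum of terms $c/(1-\lambda_i z)^{m}$, whose coefficient of $z^n$ is $c\binom{n+m-1}{m-1}\lambda_i^n$; collecting the terms sharing a given $\lambda_i$ yields $a_n = \sum_i p_i(n)\lambda_i^n$ with the $p_i$ polynomials. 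Equivalently, by the Cayley--Hamilton theorem $T$ annihilates its characteristic polynomial $\chi(x)=\prod_i (x-\lambda_i)^{m_i}$, so the entrywise sequence $(T^n)$, and hence $a_n$, satisfies the linear recurrence with constant coefficients encoded by $\chi$; here the $\lambda_i$ are exactly the eigenvalues of $T$ and the $m_i$ their multiplicities.

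The main obstacle is the final passage from matrix powers (or from the rational generating function) to the explicit combination of polynomials times exponentials: this is precisely the general solution theory of linear recurrences with constant coefficients, where the polynomial factors $p_i$ arise from repeated roots, equivalently from poles of order greater than one, equivalently from non-diagonalizable Jordan blocks of $T$. One must also observe that the eigenvalues $\lambda_i$ may be complex even though each $a_n$ is a nonnegative integer; since $A(z)$ has integer coefficients, the non-real $\lambda_i$ occur in conjugate pairs whose contributions combine into a real expression, so the stated form with constants $\lambda_i$ and polynomials $p_i$ holds as claimed.
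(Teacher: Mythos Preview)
Your argument is correct and is precisely the standard transfer-matrix/rational-generating-function proof. Note that the paper does not actually prove this proposition: it is simply stated with a reference to Flajolet and Sedgewick's \emph{Analytic Combinatorics}, and the argument you have written is essentially the one found there (Section~V.5 on regular specifications and languages), so your proposal fills in exactly what the citation points to.
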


As we have seen already, now that we have access to a smooth
inductive definition of $\mathcal{L}$ and thus of the set of
possible measurements according to the experiment described in
Definition~\ref{definition:experimentNineElements}, various
properties are readily proved using induction on the length of
a derivation in $\mathcal{L}$.

\begin{definition}\label{Definition:Agree}
We say that a string $\sigma\in \tilde\Sigma^*$ {\bf
determines} some observable $s\in \Sigma$ with value $v$,
whenever the sequence of measurements  corresponding to
$\sigma$ determines $s$ as defined in
Definition~\ref{definition:experimentNineElements} with value
$v$. We say that a string $\sigma \in \tilde \Sigma^*$
determines some context $c$, if $\sigma$ determines each
observable in $c$. We say that two strings $\sigma, \sigma' \in
\tilde \Sigma^*$ {\bf agree} on $s \in \Sigma$, whenever either
both do not determine $s$ or both determine $s$ with the same
value.
\end{definition}

With this definition at hand, we explicitly re-state some
observations that were already used in the proof of
Theorem~\ref{Theorem:GrammarAndExperimentCoincide}.

\begin{lemma}\label{lemma:AlwaysOneContextDetermined}
If some $\sigma \in \tilde \Sigma^*$ determines a context, then
any extension/continuation $\sigma\star\tau$ of $\sigma$ also
defines a context.
%\adan{, if all the new observables in
%$\sigma\star\tau$ are compatible with those in $\sigma$}.
\end{lemma}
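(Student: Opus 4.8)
The plan is to argue by induction on the length of the continuation $\tau$, since ``determining a context'' is a property of the induced value assignment that is built up measurement by measurement through (E1), (E2) and (D1). The base case $\tau=\lambda$ is immediate because $\sigma\star\lambda=\sigma$ determines a context by hypothesis. For the inductive step it is enough to consider a single extra measurement, i.e.\ to show that if $\rho:=\sigma\star\tau'$ determines some context $c=\{U,V,W\}$ and we append one more measured symbol $s\in\tilde\Sigma$, then $\rho s$ still determines \emph{some} context (possibly different from $c$, which is why the statement only asserts that \emph{a} context is determined).

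The argument rests on one purely combinatorial feature of the square $(\dag)$: every observable lies in exactly two contexts, and --- crucially --- any observable $s$ lying \emph{outside} a given context $c$ is compatible with exactly one of the three members of $c$, namely the unique member of $c$ lying in the same row or column as $s$. I would first record this fact as a one-line check on the $3\times 3$ layout; it is already implicit in the earlier counting that ``each observable is compatible with $4$ others''. With it in hand, the inductive step splits into two cases according to whether the observable underlying $s$ belongs to $c$ or not.

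If the observable underlying $s$ already belongs to $c$, then $s$ is compatible with all of $U,V,W$; by clause (D1) none of them loses its determined status, $s$ itself is (re)determined by (E1), and so $c$ remains fully determined after $\rho s$. If instead $s$ lies outside $c$, let $m$ be the unique member of $c$ compatible with $s$ and let $c''$ be the context shared by $s$ and $m$. The two members of $c$ incompatible with $s$ become undetermined under (D1), so $c$ itself may break; but $m$ stays determined (its only new neighbouring measurement, $s$, is compatible with it), and $s$ is determined by (E1), so $c''$ now contains two determined observables and its third element becomes determined by (E2). Hence $c''$ is a fully determined context of $\rho s$, completing the step.

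The main obstacle I anticipate is the bookkeeping in this second case: one must check that the newly formed context $c''$ genuinely receives all three values --- this is exactly where (E2) is invoked --- and that the ``if and only if'' in (D1) is applied in the right direction, so that $m$ is certified to survive while the other two members are certified to drop. A useful cross-check, and an alternative route, is the grammar side: by Theorem~\ref{Theorem:GrammarAndExperimentCoincide} fully determined contexts correspond exactly to composite generating symbols $\overline{XY}$, and it was already observed that once such a composite symbol appears every subsequently generated string carries one; reading off this persistence yields the lemma at once, with the case analysis above amounting to identifying which composite-symbol rule fires at each step.
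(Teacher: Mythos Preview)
Your proposal is correct. The paper does not supply a standalone proof of this lemma at all: it merely records it as an observation ``already used in the proof of Theorem~\ref{Theorem:GrammarAndExperimentCoincide}'', where the justification is precisely your alternative route --- composite generating symbols $\overline{XY}$ persist once they appear, and by the correspondence established in that theorem this is the grammar-side restatement of the lemma. Your primary argument (induction on $|\tau|$ with the case split on whether the new observable lies in the current context, using that any observable outside a row/column meets it in exactly one compatible element) is a more explicit, self-contained proof on the experiment side than anything the paper writes down; the paper simply asserts the experiment-side fact and points to the grammar for confirmation. Both routes are sound, and your direct argument has the virtue of not depending on Theorem~\ref{Theorem:GrammarAndExperimentCoincide}.
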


This lemma does not scale to systems of more qubits. The following lemma does.

\begin{lemma}\label{lemma:AtmostOneContextDetermined}
Each $\sigma \in \tilde \Sigma^*$ determines at most one context.
\end{lemma}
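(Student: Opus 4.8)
The plan is to argue directly from the value-assignment dynamics of Definition~\ref{definition:experimentNineElements}, rather than from the grammar, since the statement quantifies over \emph{all} $\sigma\in\tilde\Sigma^*$ (not only the consistent ones) and since ``$\sigma$ determines a context'' depends only on \emph{which} observables are determined, never on their $\pm1$ values. Throughout I would identify each observable with a cell $(i,j)$ of the $3\times3$ square, so that the six contexts are exactly the rows $R_i$ and columns $K_j$, and two observables are compatible precisely when they share a row or a column. I induct on the length of $\sigma$, tracking the set $D(\sigma)\subseteq\Sigma$ of determined observables as it evolves under (D1) (drop everything incompatible with the newly measured symbol), (E1) (add that symbol), and the (E2)-closure (repeatedly complete any context two of whose observables already lie in the set).

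The heart of the argument is the invariant
\[
(\star)\qquad \text{every two observables in } D(\sigma) \text{ are compatible.}
\]
Granting $(\star)$, the lemma is immediate: any two \emph{distinct} contexts contain an incompatible pair. Two distinct rows (or two distinct columns) are disjoint and contain cells differing in both coordinates, while a row $R_i$ and a column $K_j$ meet only in $(i,j)$ and so contribute a row-mate $(i,j')$ and a column-mate $(i'',j)$ with $i''\neq i$ and $j'\neq j$, which are incompatible. Hence $D(\sigma)$ cannot contain two whole contexts, which is exactly the claim.

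To prove $(\star)$ I first record the \emph{confinement fact}: writing $s$ for the last symbol of a non-empty $\sigma$, one has $D(\sigma)\subseteq R_s\cup K_s$. Indeed, after (D1) and (E1) every surviving observable lies in $R_s\cup K_s$, and the (E2)-closure cannot escape this set, because among all contexts only $R_s$ and $K_s$ can contain two members of $R_s\cup K_s$ (every other row or column meets $R_s\cup K_s$ in a single cell), so (E2) only ever completes $R_s$ or $K_s$ and never fires elsewhere. The only incompatible pairs inside $R_s\cup K_s$ are those consisting of a proper row-mate and a proper column-mate of $s$. So if $(\star)$ failed for $\sigma$, then $D(\sigma)$ would contain such a pair; since $D(\sigma)$ is closed under (E2) and contains $s$, this forces both $R_s\subseteq D(\sigma)$ and $K_s\subseteq D(\sigma)$.

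It remains to push this back to the predecessor string $\sigma'$ (so that $\sigma=\sigma' s$) and invoke the induction hypothesis; this is the step I expect to be the main obstacle, since an observable of $D(\sigma)$ may have been created afresh by (E2) rather than inherited from $D(\sigma')$. The resolution is the \emph{survivor observation}: the row $R_s$ can be completed by (E2) only if, besides $s$, a second observable of $R_s$ was already present, and the earliest such observable cannot itself have been produced by (E2) (which would already require two members of $R_s$), so it survived (D1) from $D(\sigma')$; being a row-mate of $s$ it is indeed compatible with $s$ and so does survive. The same holds for $K_s$. Thus $R_s\cup K_s\subseteq D(\sigma)$ entails that $D(\sigma')$ already contained a proper row-mate and a proper column-mate of $s$, and these two observables are mutually incompatible, contradicting $(\star)$ for $\sigma'$. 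This closes the induction and hence proves the lemma; the base case $\sigma=\lambda$ is vacuous and the case $|D(\sigma)|\le 2$ of $(\star)$ is trivial.
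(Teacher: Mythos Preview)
Your proof is correct. The paper itself states this lemma without proof --- it introduces both Lemma~\ref{lemma:AlwaysOneContextDetermined} and Lemma~\ref{lemma:AtmostOneContextDetermined} merely as explicit re-statements of ``observations that were already used in the proof of Theorem~\ref{Theorem:GrammarAndExperimentCoincide}'' --- so there is no argument in the paper to compare yours against. Your route through the invariant $(\star)$ (pairwise compatibility of the determined set) is the natural one, and both the confinement fact and the survivor observation are handled cleanly: in particular your point that (E2) can only fire on $R_s$ or $K_s$ because every other row or column meets $R_s\cup K_s$ in a single cell is exactly what makes the closure step go through. One could shortcut slightly by noting that confinement alone already reduces the possible determined contexts to $R_s$ and $K_s$, so only the simultaneous case needs the inductive step; but that is precisely what your survivor observation dispatches, so nothing is missing.
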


%%%%%%%%%%%%%%%%%%%%%%%%%%%%%%%%%%%%%%%%%%%%%%%%%%%%%%%%%%%%%%%%%%%%%%

\section{Hidden variables and de-randomization}\label{Section:HiddenVariablesAndDerandomization}

%%%%%%%%%%%%%%%%%%%%%%%%%%%%%%%%%%%%%%%%%%%%%%%%%%%%%%%%%%%%%%%%%%%%%%

In this section we wish to study the cost of various forms of
de-randomization of our experiment. First, the rather evident
observation is made that a finite set of hidden finite valued
variables can never account for indeterminism in an ideally
isolated repeatable experiment. Later we shall elaborate on
some refinements of this statement.

%%%%%%%%%%%%%%%%%%%%%%%%%%%%%%%%%%%%%%%%%%%%%%%%%%%%%%%%%%%%%%%%%%%%%%

\subsection{Need for infinitely many hidden discrete variables}\label{subSection:HiddenVariablesAndDerandomization}

%%%%%%%%%%%%%%%%%%%%%%%%%%%%%%%%%%%%%%%%%%%%%%%%%%%%%%%%%%%%%%%%%%%%%%

Let us reconsider our experiment from
Section~\ref{Section:ExperimentAndLanguage} where we measure
our sequence of observables. There is an amount of randomness
present. If we measure $A$ for the first time, the outcome can
be both $1$ or $-1$. However, it might be the case that the
particular outcome is dependent on some additional parameters
that we can not directly observe. We shall refer to those
additional parameters as \emph{hidden variables} (HVs).

The question now \adan{raises}, can we find a model that uses a
finite amount of hidden variables such that, once they have
been assigned some initial values at the outset of our
experiment, then the outcome of the experiment becomes fully
deterministic. We shall first see that the answer is NO if our
hidden variables can only take on a finite amount of values and
under some additional general assumptions. Later we shall
fine-tune on this result.

Let us first formulate these two general additional assumptions
that from now on shall be implicitly assumed throughout the
rest of this section.

\begin{quote}{\bf Assumption of independence of exact time}
We assume that in our experiment, the exact outcome of our measurement
does not depend on the exact time the measurement has taken place.
The only important thing is the history of measurements so far. Thus,
for any positive $t_1$ and $t_2$, measuring $B$ after $t_1$ seconds
after measuring $A$ should yield the same value as $B$ after $t_2$
seconds after measuring $A$, Ceteris Paribus.
\end{quote}

Moreover, we shall assume that
%\footnote{Although this is not in
%the lines of Quantum Mechanics, as there it is known that each
%measurement is a \emph{part} of the system \adan{[What you mean
%by a measurement being ``part'' of a (physical) system?. Maybe
%we should remove this footnote.]}.} 
there is no relevant
interaction between our experiment and the outside world. Thus,
with these two assumptions, a deterministic explanation of our
experiment will exist of a function
\[
f(\vec{x},A):\ \  \ \ H \times \Sigma' \ \ \to H,
\]
where $\Sigma'$ is the set of observables, in our case $\Sigma'
= \Sigma = \{ A,B,C,a,b,c,\alpha,\beta,\gamma\}$ and $H$ is
some space where the (hidden) variables attain their values.
Thus basically, the function $f$ will tell you, once you know
all the underlying hidden variables in $\vec{x}$, what the next
value will be. So, the idea is that the hidden variables fully
determine the situation and the outcome of future measurements.
Thus, for convenience, and without loss of generality we may
assume that $\Sigma \subseteq H$.

We can easily extend $f$ in such a way that given the initial
situation in terms of the HVs we can obtain a full description,
again in terms of the HVs, of any future situation given that
we decide to perform a sequence of measurements. If $\sigma$ is
some sequence of observables, that is, $\sigma \in  \Sigma^*$,
we will denote by $f^{\sigma}(\vec{x})$ the values of the HVs
after performing this sequence $\sigma$ of measurements when we
start with initial conditions $\vec{x}$. Formally, we define
this as follows.

\begin{definition}[$f^{\sigma}$]
Let $f$ be a deterministic explanation of our experiment and
let $\sigma \in (\Sigma')*$, that is, $\sigma$ is a string of
observables. The function $f^{\sigma}$ is defined inductively
on the length of the string $\sigma$ as follows:
\[
\begin{array}{lll}
f^{\lambda} &:= \mathbb{I} \ \ &\mbox{ where $\mathbb{I}$ is the identity operator;}\\
f^{\tau \star A} (\vec{x}) &:= f(f^{\tau}(\vec{x}),A) \ \ \ \  &\mbox{ where $\star$ is the concatenation operator}.
\end{array}
\]
\end{definition}

Now basically, if $\vec{x}$ takes its values in some discrete
and finite space $H$, then there are only finitely many
starting conditions $\vec{x}$. With these finitely many degrees
of freedom in the starting conditions, we can never account for
the infinitude of choices that can be made in our sequence of
measurements. We can make this idea more formal in the
following easy theorem.

\begin{theorem}\label{theorem:noFiniteNumberOfHVs}
Let $f$ be a deterministic explanation of our experiment and
let $H$ be the space values where the hidden variables take on
their values. If $H$ is finite, then it can never account for
all the possible outcomes of our experiment. That is, in this
case, the hidden variables do not fully determine the full
course of measurements to follow.
\end{theorem}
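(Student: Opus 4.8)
The plan is to make the counting idea behind the informal paragraph precise and then close it with a pigeonhole argument. First I would fix what it means for the hidden variables to \emph{account for all possible outcomes}. A deterministic explanation $f$ together with an initial condition $\vec x_0 \in H$ assigns to every string of observables $\sigma = s_1\cdots s_n \in \Sigma^*$ a unique string of outcomes $O_{\vec x_0}(\sigma)\in\tilde\Sigma^*$: the outcome of the $k$-th measurement is read off the intermediate state $f^{s_1\cdots s_k}(\vec x_0)$, which is legitimate since $\Sigma\subseteq H$, and the whole assignment is well defined because $f$, and hence each $f^\tau$, is a function. Accounting for all outcomes then means: for every $\sigma\in\Sigma^*$ and every $w\in\mathcal L$ whose underlying observables are exactly $\sigma$, there is some $\vec x_0\in H$ with $O_{\vec x_0}(\sigma)=w$.

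With this in place the core is a cardinality mismatch. For a \emph{fixed} observable string $\sigma$, the map $\vec x_0 \mapsto O_{\vec x_0}(\sigma)$ is a function from $H$ to $\tilde\Sigma^*$, so its image has at most $|H|$ elements. Hence it suffices to exhibit, for arbitrarily large $N$, a single observable string admitting more than $N$ consistent outcome strings. I would build such a string from an incompatible pair, e.g. $A$ and $b$ (one checks on $(\dag)$ that $b$ lies in neither context of $A$), and take $\sigma_n := (Ab)^n$. Along $\sigma_n$ every measurement is of an observable that the immediately preceding, incompatible, measurement had rendered undetermined (clause (D1)); moreover at most one of $A,b$ is determined at any stage and no context of the square contains both, so no context is ever fully determined (cf. Lemma~\ref{lemma:AtmostOneContextDetermined}). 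Consequently neither the re-measurement constraint of (E1) nor the forced value of (E2) ever applies, and each of the $2n$ outcomes is free. Thus all $4^n$ strings in $\{1,-1\}^{2n}$ occur as consistent outcome strings over $\sigma_n$.

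Finite $H$ now fails: choosing $n$ with $4^n>|H|$, the at most $|H|$ strings $O_{\vec x_0}(\sigma_n)$ cannot exhaust the $4^n$ consistent outcomes over $\sigma_n$, so some consistent outcome is realized by no initial condition. This contradicts accounting for all outcomes, proving the theorem. I expect the main obstacle to be not the pigeonhole step but the two bookkeeping points that surround it: fixing a clean, defensible formalization of ``reading the outcome off the hidden state'' (so that $O_{\vec x_0}$ is genuinely well defined from $f$), and verifying the combinatorial claim that $\sigma_n$ leaves every outcome free, which rests on the specific incompatibility structure of the Peres-Mermin square rather than on any generic feature of regular languages. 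One could instead invoke Proposition~\ref{Proposition:CountingNumberOfStrings} to see that $\mathcal L$ has unboundedly many strings of each length, but the explicit family $\sigma_n$ supplies the needed growth more directly.
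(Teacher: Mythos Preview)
Your proposal is correct and follows essentially the same counting idea as the paper: a fixed initial condition determines the outcome string over any fixed observable sequence, so distinct outcome strings over the \emph{same} observable sequence force distinct initial conditions, and then one only needs more such outcomes than $|H|$. The paper's proof is a brief sketch that illustrates this with the single pair $A,b$ and then appeals loosely to ``infinitely many sequences in $\mathcal L$''; your version makes the argument precise by iterating that very pair to $\sigma_n=(Ab)^n$, checking via (D1)/(E1)/(E2) that all $4^n$ outcome strings are consistent, and closing with pigeonhole---so you are supplying exactly the bookkeeping the paper leaves implicit rather than taking a different route.
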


\begin{proof}
The basic idea is that, if $H$ is finite, then $\vec{x}$ only
can be a finite number of different initial conditions. A
different sequence of measurements can only occur when the
initial values of $\vec{x}$ were different. Thus, for example,
the outcome $A = 1; b=1$ must necessarily have started with a
different initial condition $\vec{x}$  than the sequence $A =
1; b =-1$. As there are infinitely many different sequences
that are in our language $\mathcal{L}$, they can only be
accounted for by infinitely many different initial conditions.
\end{proof}

%%%%%%%%%%%%%%%%%%%%%%%%%%%%%%%%%%%%%%%%%%%%%%%%%%%%%%%%%%%%%%%%%%%%%%

\subsection{Refinements and
lower-bounds}\label{SectionHolevoBound}

%%%%%%%%%%%%%%%%%%%%%%%%%%%%%%%%%%%%%%%%%%%%%%%%%%%%%%%%%%%%%%%%%%%%%%

Now that we have seen in
Theorem~\ref{theorem:noFiniteNumberOfHVs} that no finite amount
of HVs suffices to de-randomize our experiment, we can ask
ourselves the following questions concerning refinements of
Theorem~\ref{theorem:noFiniteNumberOfHVs}.

\begin{enumerate}
\item\label{Item:HVperLength} What is the amount of binary
    HVs needed to explain all experiments of length $n$.
    That is, how many bits of memory are needed at least to
    explain from the initial conditions the full outcome of
    all possible sequences of measurements up to length
    $n$?

\item\label{Item:MemoryForRecognizing} What is the minimal
    amount of binary memory needed to \emph{recognize} any
    string in $\mathcal{L}$ by walking linearly through
    that string? Notice, in this question the aim is not to
    de-randomize the experiment by using HVs, rather it
    asks how much memory is needed in order to walk through
    a string of the form $A\tilde{b}Ab\gamma C\tilde{c}$
    without necessarily copying the entire string in some
    memory, and tell in the end\footnote{So this particular
    string $A\tilde{b}Ab\gamma C\tilde{c}$ is in
    $\mathcal{L}$.} of it whether or not the string is in
    $\mathcal{L}$.

\end{enumerate}
Question \ref{Item:MemoryForRecognizing} and a variation
thereof are addressed in the next section. An answer to
Question \ref{Item:HVperLength} essentially amounts to counting
the number of strings up to length $n$ and then taking the
logarithm of that number. Note that, in the light of
Proposition \ref{Proposition:CountingNumberOfStrings} and
Corollary \ref{Corollary:PMisRegular} we know the order of
magnitude of this number. That is, we know that the order of
magnitude of number strings of length $n$ in $\mathcal{L}$ is
(disregarding the polynomials, which are \adan{negligible} on a
logarithmic scale) $\lambda^n$, whence the order of strings up
to length $n$ is about $\lambda^{n+1}$ and the logarithm of
that results in a growth linear in $n$. Note that the total
number of strings of length up to $n$ in $\tilde \Sigma^*$ is
of order $|\Sigma^*|^{n+1}$ so that the logarithm of that is
also of order linear in $n$. In this sense, the number of HVs
needed to predict all experiments up to length $n$ is as bad
(high) as it could possibly be.

%%%%%%%%%%%%%%%%%%%%%%%%%%%%%%%%%%%%%%%%%%%%%%%%%%%%%%%%%%%%%%%%%%%%%%

\section{Hidden variables and the Holevo bound}

%%%%%%%%%%%%%%%%%%%%%%%%%%%%%%%%%%%%%%%%%%%%%%%%%%%%%%%%%%%%%%%%%%%%%%

Holevo \cite{Holevo} showed that the maximum information carrying capacity of a qubit is one bit. Therefore, a machine which simulates qubits but has a density of memory (in bits per qubit) larger than one violates the Holevo bound. In this section we show that a very broad class of machines that simulate the Peres-Mermin square violate the Holevo bound.

In \cite{MemoryCostQContextuality} deterministic automata are
presented that generate a subsets of $\mathcal{L}$. In that
paper lower bounds on the amount of states of these automata
are presented. Of course, as  $\mathcal{L}$ inhibits a genuine
amount of non-determinism any deterministic automata will
generate only a proper subset of $\mathcal{L}$ but never the
whole set $\mathcal{L}$ itself.

In the next subsection we shall introduce the notion of an
Memory-factoring Abstract Generating Automata (MAGA) for
$\mathcal{L}$ and prove a
%sharp?
lower bound on the number of states any MAGA should have if it
were to generate $\mathcal{L}$. In a sense, a MAGA for
$\mathcal{L}$ will generate all of $\mathcal{L}$.

%%%%%%%%%%%%%%%%%%%%%%%%%%%%%%%%%%%%%%%%%%%%%%%%%%%%%%%%%%%%%%%%%%%%%%

\subsection{Memory-factoring Abstract Generating Automata}

%%%%%%%%%%%%%%%%%%%%%%%%%%%%%%%%%%%%%%%%%%%%%%%%%%%%%%%%%%%%%%%%%%%%%%

In this project we are not interested in the details of
(abstract) machine `hardware'. Thus, in our definition of a
MAGA we will try to abstract away from the implementation
details of language generating automata. We will do this by
invoking the notion of \emph{computability}. By the
Church-Turing thesis (see, a.o. \cite{Sipser1997}) any
sufficiently strong and mechanizable model of computation can
generate the same set of languages, or equivalently, solve the
same same set of problems. Thus, instead of fixing one
particular model of computation and speak of computability
therein, we may just as well directly speak of computable
outright leaving the exact details of the model unspecified.

Basically, a MAGA $\mathcal{M}$ for $\mathcal{L}$ is an
abstract machine that will predict the outcome of a measurement
of some observable $s\in \Sigma$ in the experiment as described
in Definition~\ref{definition:experimentNineElements} given
that a sequence of measurements $\sigma \in \tilde \Sigma^*$
has already been done. If according to the experiment $s$ is
determined by $\sigma$ with value $v\in \{1,-1\}$, then
$\mathcal{M}$ should output $v$. If the observable $s$ is not
determined by $\sigma$ then $\mathcal{M}$ should output $r$
indicating that the experiment can randomly output a $-1$ or a
$1$.

The only requirement that we impose on a MAGA is that its
calculation in a sense \emph{factors through} a set of memory
states\footnote{A memory state is something entirely different
from, and hence is not to be confused with, a quantum state.}
in the sense that before outputting the final value, the
outcome of the calculation is in whatever way reflected in the
internal memory of the machine. Let us now formulate the formal
definition of a MAGA for $\mathcal{L}$.

\begin{definition}[MAGA]
A \emph{Memory-factoring Abstract Generating Automata} (MAGA)
for $\mathcal{L}$ is a quadruple $\langle M, M_0, M_1, S
\rangle$ with
\begin{enumerate}

\item
$S$ is (finitely or infinitely) countable set of memory states;

\item
all of $M$, $M_0$ and $M_1$ are computable functions such that

\begin{enumerate}

\item
$M = M_1 \circ M_0$;

\item
$M_0 \ :\ \mathcal{L} \times \Sigma \ \to \ S \times\Sigma$,\\
where\footnote{Here $\Pi_2$ is the so-called projection
function that projects on the second coordinate: $\Pi_2
(\langle x , y\rangle) = y$. Basically, $\Pi_2 \circ
M_0 = \mathbb{I}$ just says that $M_0$ only tells us
which state is defined by a sequence $\sigma \in \tilde
\Sigma^*$.} $\Pi_2 \circ M_0 = \mathbb{I}$;

\item
and $M_1\ : \ S \times \Sigma \ \to \ \{1,-1,r \}$,
\end{enumerate}

\end{enumerate}

such that

\[
\begin{array}{lrllr}
M(\sigma,s) =& 1 \ \ &\mbox{if  $\sigma$ determines $s$ with value}&\ &1;\\
M(\sigma,s) =& -1 \ \ &\mbox{``}&\mbox{"}&-1;\\
M(\sigma,s) =& r \ \ &\mbox{if  $\sigma$ does not determine $s$.}&\ &\ \\
\end{array}
\]
\end{definition}
In our definition, we have that $M_0 \ :\ \tilde \Sigma^{*}
\times \Sigma \ \to \ S \times\Sigma$, where $M_0$ does nothing
at all on the second coordinate, that is, on the $\Sigma$ part.
We have decided to nevertheless take the second coordinate
along so that we can easily compose $M_0$ and $M_1$ to obtain
$M$. This is just a technical detail. The important issue is
that the computation \emph{factors} through the memory. That
is, essentially we have that $M:  \mathcal{L}\times \Sigma
\stackrel{M_0}{\longrightarrow} S
\stackrel{M_1}{\longrightarrow} \{ 1,-1,r\}$ where `$M_1$
borrows some extra information on the $\Sigma$-part of the
original input'.

\begin{theorem}
The class of MAGA-computable functions is the full class of computable functions.
\end{theorem}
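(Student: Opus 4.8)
The statement asserts an equality of two classes, so the plan is to prove two inclusions. One direction is immediate: every function computed by a MAGA has the form $M = M_1 \circ M_0$, a composition of computable functions, and the computable functions are closed under composition; hence every MAGA-computable function is computable. The whole content therefore lies in the converse, namely that the factoring-through-memory requirement costs nothing in expressive power, so that any computable function of the relevant type is realized by some MAGA.

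For the converse I would argue by exhibiting, for an arbitrary computable $M : \mathcal{L} \times \Sigma \to \{1,-1,r\}$, a MAGA that computes it, the idea being to let the memory record the entire measurement history. Concretely, fix a computable G\"odel numbering $e : \tilde\Sigma^* \to \mathbb{N}$ and set $S := \mathbb{N}$, which is countable as required. Define $M_0(\sigma,s) := (e(\sigma), s)$; this is computable, leaves the $\Sigma$-coordinate untouched so that $\Pi_2 \circ M_0 = \mathbb{I}$, and assigns to each history $\sigma$ its own memory state, in line with the footnote that $M_0$ merely records which state a sequence defines. Define $M_1(n,s) := M(e^{-1}(n), s)$, which is computable since both $e^{-1}$ and $M$ are. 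Then $(M_1 \circ M_0)(\sigma,s) = M(\sigma,s)$, so $\langle M, M_0, M_1, S\rangle$ is a MAGA computing $M$.

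Finally, to see that this applies to the function actually arising from the experiment, and not merely to abstract computable functions, one invokes Corollary~\ref{Corollary:PMisRegular}: since $\mathcal{L}$ is regular it is decidable, and the determination relation of Definition~\ref{Definition:Agree} is computable from it, so the experiment's own function $M$ is computable and hence MAGA-computable. I do not expect a genuine obstacle here; the only point requiring a little care is the bookkeeping that $M_0$ preserve its second coordinate while depending only on $\sigma$ in its first. The real moral, which the theorem makes precise, is that the MAGA constraint imposes no restriction on \emph{what} can be computed; all the mathematical substance is relegated to \emph{how large} the memory $S$ must be, which is exactly the lower-bound question pursued in the sequel.
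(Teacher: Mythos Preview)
Your proposal is correct and follows essentially the same approach as the paper: both show that any computable $M$ with the required type becomes a MAGA by letting the memory $S$ encode the full input string, taking $M_0$ to be (up to coding) the identity and $M_1$ to be $M$ itself. Your version is simply more explicit---you spell out the G\"odel numbering, verify $\Pi_2\circ M_0=\mathbb{I}$, and also note the trivial forward inclusion via closure under composition---whereas the paper leaves these details implicit.
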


\begin{proof}
It is easy to see that if we have infinite memory, we can
conceive any Turing Machine $\mu$ with the required
input-output specifications as a MAGA where $M_0$ is just the
identity, $S$ is coded by the tape input, and $M_1$ is the
function computed by $\mu$. Thus, the class of MAGA-computable
functions is indeed the full class of computable functions.
\end{proof}

%%%%%%%%%%%%%%%%%%%%%%%%%%%%%%%%%%%%%%%%%%%%%%%%%%%%%%%%%%%%%%%%%%%%%%

\subsection{A lower bound for the Peres-Mermin square}

%%%%%%%%%%%%%%%%%%%%%%%%%%%%%%%%%%%%%%%%%%%%%%%%%%%%%%%%%%%%%%%%%%%%%%

With the formal definition at hand we can now state and proof
the main theorem for the Peres-Mermin square. In the proof we
will use the so-called Pigeon Hole Principle (PHP). The PHP
basically says that there is no injection of a finite set into
a proper subset of that set. Actually we will only use a
specific case of that which can be rephrased as, if we stuck
$n+1$ many pigeons in $n$ many holes, then there will be at
least one hole that contains at least two pigeons.

\begin{theorem}\label{theorem:lowerBoundContextsTimesFreedom}
Any MAGA for $\mathcal{L}$ contains at least 24 states. That is, if $\langle M,M_0 ,M_1 ,S \rangle$ is such a MAGA, then $|S| \geq 24$.
\end{theorem}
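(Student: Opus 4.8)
The plan is to use the Pigeon Hole Principle to show that a MAGA for $\mathcal{L}$ must distinguish at least $24$ different memory states. The key observation is that the memory state reached after a sequence $\sigma$ (that is, $\Pi_1\circ M_0(\sigma,s)$) must contain enough information to correctly answer all future queries. So if two sequences $\sigma$ and $\sigma'$ land in the \emph{same} memory state but there exists an observable $s\in\Sigma$ on which they disagree in the sense of Definition~\ref{Definition:Agree}, we derive a contradiction: since $M = M_1\circ M_0$ factors through $S$, and since $M_1$ only sees the memory state together with the $\Sigma$-part $s$, we would have $M(\sigma,s) = M(\sigma',s)$, yet the experiment forces these outputs to differ. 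Hence the map sending each $\sigma$ to its memory state must \emph{separate} any two sequences that disagree on some observable; equivalently, the number of memory states is at least the number of distinct ``determination profiles'' that a sequence can induce.

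First I would make the separation argument precise. Define, for each $\sigma\in\tilde\Sigma^*$, its determination profile to be the function $s\mapsto\bigl(\text{value of }s\text{ determined by }\sigma,\text{ or }r\bigr)$ on $\Sigma$. The factoring property shows that if $M_0$ assigns the same memory state to $\sigma$ and $\sigma'$, then $\sigma$ and $\sigma'$ must have identical profiles; contrapositively, distinct profiles force distinct states. So it suffices to exhibit $24$ sequences with pairwise distinct profiles, and then PHP (in the stated form) closes the argument.

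Next I would count the realizable profiles. By Lemma~\ref{lemma:AtmostOneContextDetermined}, each $\sigma$ determines at most one context, and by the structure of the grammar a nonempty consistent sequence determines either a single observable or exactly one full context (the composite generating symbols $\overline{XY}$ correspond precisely to a determined context). The natural count is: there are $6$ contexts in the square $(\dag)$; a determined context fixes a $\pm1$ assignment to its three observables subject to the product constraint ($+1$ for five contexts, $-1$ for the exceptional context $\{C,c,\gamma\}$), which leaves $2^{3}/2 = 4$ admissible value-assignments per context. This yields $6\times 4 = 24$ distinct profiles arising from sequences that fully determine a context, each genuinely realizable by an explicit short derivation in the grammar. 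These $24$ sequences have pairwise distinct profiles — either they determine different contexts, or the same context with different admissible values — so by the factoring property they must occupy $24$ distinct memory states, giving $|S|\geq 24$.

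The main obstacle I anticipate is the bookkeeping that makes the count exactly $24$ rather than something larger or smaller: one must verify that all $6\times 4$ combinations are simultaneously realizable by consistent sequences (constructing a witnessing string for each, perhaps by measuring two compatible observables in the chosen context to the desired values and invoking (E2)), and one must confirm that no two of these witnesses accidentally share a profile. A secondary subtlety is to argue carefully that the factoring hypothesis really does force the separation: the output $M_1$ may depend on the extra $\Sigma$-input $s$, so the contradiction must be extracted by choosing the specific $s$ on which the two sequences disagree, and checking that $M_0$ returns the same state on the relevant inputs. Once these points are nailed down, the lower bound follows immediately from PHP.
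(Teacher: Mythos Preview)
Your proposal is correct and follows essentially the same route as the paper: exhibit $24$ sequences whose determination profiles are pairwise distinct, then use the factoring $M = M_1\circ M_0$ together with the Pigeon Hole Principle to force $|S|\geq 24$. The only cosmetic difference is that the paper parametrizes the $24$ profiles as $6$ contexts times $2\times 2$ choices for the \emph{first two} values in a fixed enumeration of each context (the third being determined by the product constraint), whereas you count $6\times(2^3/2)=24$ admissible full assignments directly; these are the same $24$ objects.
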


\begin{proof}
%I think that the only important assumptions in the proof are:
%(+1) An assignment of values to observables within one context can only determine the values of new observables within that very same context and no others outside that context.
%(+2) The size of the intersection of two contexts is smaller than the number of observables within one context that determine the entire context
%
%Maybe later I can modularize the proof so that the assumptions come better to the fore.
%I think that the (+i)'s are the only really important assumptions but I might be overlooking something.
%We will come across that with the scaling for sure.
%Actually, I think that the (+i)s can be even weakened slightly but if they hold this is not necessary.
%
We shall actually use a slightly modified version of a MAGA to prove our theorem. In the unmodified MAGA, the function $M_0$ tells us what state is attained on what sequence of measurements $\sigma \in \tilde \Sigma^*$. In the modified MAGA we will only require that $M_0$ will tell us in what state the machine is whenever $\sigma$ determines a full context.

To express this formally we define
\[
\tilde \Sigma^+ \ := \ \{ \sigma \in \tilde \Sigma^* \mid \mbox{ $\sigma$ determines a full context of observables}\}.
\]
Thus, instead of requiring that $M_0$ maps from $\tilde
\Sigma^* \times \Sigma$ to $S\times \Sigma$, we will require
that $M_0$ maps from $\tilde \Sigma^+ \times \Sigma$ to
$S\times \Sigma$. Clearly, if we have a lower bound for any
MAGA $\mathcal{M}$ with this restriction on $M_0$, we
automatically have the same lower bound for any MAGA
$\mathcal{M}'$ outright. This is so as any MAGA $\mathcal{M}'$
trivially defines a restricted MAGA $\mathcal{M}$ by just
restricting the domain of $M_0$ to $\tilde \Sigma^+ \times
\Sigma$.
\medskip

To continue our proof, let $s_1, \ldots, s_{24}$ enumerate all
possible combinations $\langle c, v_1,v_2\rangle$ of contexts
and the first two\footnote{For horizontal contexts we will
enumerate from left to right and for \adan{vertical} contexts
we will enumerate from top to bottom. Thus, for example, the
first two observables of the context $\langle C, c,
\gamma\rangle$ are $C$ and $c$.} values of the first two
observables of that context $c$. Note that there are indeed $6
\times 2\times 2 = 24$ many such combinations. We define a map
\[
C \ : \ \ \ \tilde \Sigma^+ \ \to \ \{ s_1, \ldots, s_{24} \}
\]
in the canonical way, mapping an element $\sigma \in \tilde
\Sigma^+$ to that $s_i$ that corresponds to the triple
consisting of the context that is determined by $\sigma$
followed by the first two values of the the first two
observables of that context.  By Lemmas
\ref{lemma:AlwaysOneContextDetermined} and
\ref{lemma:AtmostOneContextDetermined} the function $C$ is
well-defined.

Now, let $\sigma_1,\ldots,\sigma_{24}$ be representatives in
$\tilde \Sigma^+$ of $s_1,\ldots,s_{24}$ such that
$C(\sigma_i)= s_i$. For a contradiction, let us assume that
there exists some restricted MAGA $\langle M, M_0, M_1,
S\rangle$ for $\mathcal{L}$ with $|S| \leq 23$. By the Pigeon
Hole Principle, we can choose for this MAGA some $\sigma_i$ and
some different $\sigma_j$ such that\footnote{Par abus de
langage we will write $M_0(\sigma_i)$ as short for
$\Pi_2(M_0(\sigma_i,s))$.}
\[
M_0(\sigma_i) = M_0(\sigma_j).
\]
We now use the following claim that shall be proved below.
Recall from Definition \ref{Definition:Agree} what it means for
two sequences to agree on some variable.
\begin{claim}
If $\sigma_k \neq \sigma_l$ then there is some $s \in \Sigma$ such that $\sigma_k$ and $\sigma_l$ disagree on $s$.
\end{claim}
Once we know this claim to hold it is easy to conclude the
proof. Consider any $s\in \Sigma$ on which $\sigma_i$ and
$\sigma_j$ disagree. By the definition of $M$ we should have
that $M(\sigma_i,s) \neq M(\sigma_j,s)$. However as
$M_0(\sigma_i,s) = M_0(\sigma_j,s)$ we see that $M_1\circ M_0
(\sigma_i,s) = M_1\circ M_0 (\sigma_j,s)$. But $M_1\circ M_0 =
M$, which contradicts $M(\sigma_i,s) \neq M(\sigma_j,s)$. We
conclude that $M$ can not have 23 or less states.
\medskip

Thus to finalize our proof we prove the claim. Let $C(\sigma_k)
= \langle c^k, v_0^k,v_1^k\rangle \neq  \langle c^l,
v_0^l,v_1^l \rangle = C(\sigma_l)$. If $c^k \neq c^l$ then
$c^k$ contains at least two observables on which $\sigma_k$ and
$\sigma_l$ agree as each of these $\sigma$'s only determine
observables in their respective contexts.

In case $c^k=c^l$, then one of $v_0^k, v_1^k$ differs from the
corresponding one in $v_0^l, v_1^l$ giving rise to a
disagreement between $\sigma_k$ and $\sigma_l$.

This concludes the proof of the claim and thereby of
Theorem~\ref{theorem:lowerBoundContextsTimesFreedom}.
\end{proof}

\begin{remark}
Note that the proof of
Theorem~\ref{theorem:lowerBoundContextsTimesFreedom} nowhere
invokes the notion of computability therefore proving actually
something stronger.
\end{remark}

One can easily see that for $\mathcal{L}^+$ the obtained lower
bound is actually sharp in the sense that there is a MAGA with
24 memory states for $\mathcal{L}^+$. However, is seems that
for $\mathcal{L}$ this is not the case.

Any memory-factoring device that \emph{recognizes}
$\mathcal{L}$ --let us call that a MARA for convenience-- can
be turned into a MAGA for $\mathcal{L}$. In analogy with our
MAGA, we conceive such a device as some $M' : \mathcal{L}
\times \tilde \Sigma \stackrel{M'_0}{\longrightarrow} S' \times
\Sigma  \stackrel{M'_1}{\longrightarrow} \{ \mbox{YES},
\mbox{NO}\}$. Now, any such device can be transformed to a MAGA
$\langle M, M_0, M_1, S \rangle$ with $S = S' \times S'$ and
$M_0$ maps any pair $\langle \sigma, s\rangle$ to the pair
$\langle M_0'(\sigma, s), M_0'(\sigma, \overline{s})\rangle$.
Then $M_1$ will output $r$ only if both $\sigma s$ and $\sigma
\overline{s}$ are in $\mathcal{L}$ and do the obvious thing
otherwise. To put it more formal, $M_1$ acts on a pair from
$S'\times S'$ using $M_1'$ for each coordinate and consequently
mapping (YES, YES) to $r$, mapping (YES, NO) to 1,
and\footnote{Note that (NO, NO) cannot occur.} mapping (NO,
YES) to $-1$. Note that if $n$ is a lower bound for a MAGA,
then the lower bound on a MARA that corresponds to this
construction is of size $\sqrt n$ thus giving a partial answer
to Question \ref{Item:MemoryForRecognizing} from
Section~\ref{SectionHolevoBound}.

%%%%%%%%%%%%%%%%%%%%%%%%%%%%%%%%%%%%%%%%%%%%%%%%%%%%%%%%%%%%%%%%%%%%%%

\subsection{Scaling}

%%%%%%%%%%%%%%%%%%%%%%%%%%%%%%%%%%%%%%%%%%%%%%%%%%%%%%%%%%%%%%%%%%%%%%

We first note that the proof of
Theorem~\ref{theorem:lowerBoundContextsTimesFreedom} is very
amenable to generalizations:
\begin{remark}\label{Remark:GeneralizationLowerBound}
The proof of
Theorem~\ref{theorem:lowerBoundContextsTimesFreedom} easily
generalizes under some rather weak conditions giving rise to
lower bounds of $\# \mbox{contexts} \times 2^{(\#\mbox{degrees
of freedom in one context})}$.
\end{remark}
The Peres-Mermin square $(\dag)$ that corresponds to the
two-qubit system can be generalized in various ways
\adan{\cite{Cabello10,Cabello11}}. One particular
generalization for $n$ qubits gives rise to a system where each
context consists of exactly $d$ elements with $d= 2^n$
\adan{\cite{Cabello11}}. Moreover, there are $c:= \prod_{k=1}^m
(2^k +1)$ different many such contexts. Each context is now
determined by a particular selection of $n$ of its elements. We
shall denote the corresponding languages by $\mathcal{L}_n$.
Thus, what we have called $\mathcal{L}$ so far in this paper,
would correspond to $\mathcal{L}_2$.

\begin{theorem}\label{Theorem:GeneralizedLowerBound}
Any MAGA for $\mathcal{L}_n$ contains at least $2^n \cdot \prod_{k=1}^n (2^k +1)$ many different memory states.
\end{theorem}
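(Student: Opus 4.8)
The plan is to mimic exactly the proof of Theorem~\ref{theorem:lowerBoundContextsTimesFreedom}, now following the recipe laid out in Remark~\ref{Remark:GeneralizationLowerBound}. The target lower bound factors as $2^n \cdot \prod_{k=1}^n(2^k+1)$, which I read as $\#\mbox{contexts} \times 2^{(\#\mbox{degrees of freedom in one context})}$. So the first step is bookkeeping: verify that $\mathcal{L}_n$ has exactly $c = \prod_{k=1}^n(2^k+1)$ contexts, that each context has $d = 2^n$ elements, and that each context is determined once $n$ of its elements are fixed. The remaining $d - n = 2^n - n$ elements are then forced by the product constraint, so the number of free choices \emph{within} one determined context is $n$ binary values, giving $2^n$ degrees of freedom. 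Multiplying, the total number of distinguishable ``context-plus-boundary-values'' configurations is $2^n \cdot c = 2^n \cdot \prod_{k=1}^n(2^k+1)$, matching the claimed bound.

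Next I would set up the same pigeonhole machinery. Define $\tilde\Sigma^+$ as the set of measurement strings that determine a full context, restrict $M_0$ to $\tilde\Sigma^+ \times \Sigma$ exactly as in the previous proof (noting that a lower bound for the restricted MAGA transfers to the unrestricted one), and let $N := 2^n \cdot \prod_{k=1}^n(2^k+1)$. I would introduce a well-defined classification map $C : \tilde\Sigma^+ \to \{s_1,\ldots,s_N\}$ sending each $\sigma$ to the pair consisting of the unique context it determines together with the values of the $n$ distinguished determining observables of that context. For this map to be well-defined I need the analogues of Lemma~\ref{lemma:AlwaysOneContextDetermined} and Lemma~\ref{lemma:AtmostOneContextDetermined} in the $n$-qubit setting. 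The excerpt explicitly flags that Lemma~\ref{lemma:AtmostOneContextDetermined} (at most one context determined) is the one that scales, which is the version I really need here; I would invoke it to guarantee uniqueness of the determined context, so that $C$ returns a single $s_i$.

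Assuming for contradiction a MAGA with $|S| \le N-1$, I pick representatives $\sigma_1,\ldots,\sigma_N$ with $C(\sigma_i) = s_i$ and apply the Pigeon Hole Principle to find $i \neq j$ with $M_0(\sigma_i) = M_0(\sigma_j)$. The crux is then the scaled version of the Claim: if $\sigma_k \neq \sigma_l$ (as classification labels), then $\sigma_k$ and $\sigma_l$ disagree on some observable $s \in \Sigma'$. The argument splits as before. If the determined contexts differ ($c^k \neq c^l$), then since each $\sigma$ determines \emph{only} observables inside its own unique context, an observable determined by $\sigma_k$ but lying outside $c^l$ witnesses a disagreement. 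If the contexts coincide but the distinguished value-tuples differ, then some determining observable takes opposite values under the two strings, again a disagreement. Given this disagreement on some $s$, the definition of $M$ forces $M(\sigma_i,s) \neq M(\sigma_j,s)$, yet $M_0(\sigma_i) = M_0(\sigma_j)$ forces $M(\sigma_i,s) = M_1\circ M_0(\sigma_i,s) = M_1\circ M_0(\sigma_j,s) = M(\sigma_j,s)$, a contradiction. Hence $|S| \geq N$.

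The main obstacle I anticipate is not the pigeonhole skeleton, which copies over verbatim, but the structural bookkeeping about the generalized squares and the disagreement argument in the first case. Specifically, I must be certain that in the $n$-qubit generalization an observable determined by $\sigma_k$ genuinely lies outside the foreign context $c^l$ so that a disagreement is forced rather than a mere ``one determines it, the other does not'' situation that might still count as agreeing vacuously under Definition~\ref{Definition:Agree}. Pinning down that ``each $\sigma$ determines only observables in its own context'' in the scaled setting is precisely where the scalable Lemma~\ref{lemma:AtmostOneContextDetermined} does the heavy lifting, and it is the step I would write out most carefully.
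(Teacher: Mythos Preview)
Your proposal is correct and follows essentially the same route as the paper's proof: restrict $M_0$ to $\mathcal{L}_n^+$, enumerate the $(n{+}1)$-tuples of (context, $n$ determining values), pick representatives, apply the Pigeon Hole Principle, and invoke the scaled Claim. The paper's own proof is extremely terse and simply asserts that ``the claim obviously holds also in the general setting,'' so your write-up is in fact more explicit than the original. You also correctly isolate the one structural subtlety the paper flags: Lemma~\ref{lemma:AlwaysOneContextDetermined} does \emph{not} scale (so $\mathcal{L}_n^+$ is not closed under extensions), but this is irrelevant since only Lemma~\ref{lemma:AtmostOneContextDetermined} is needed for $C$ to be well-defined on $\mathcal{L}_n^+$.

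One small clarification on your stated obstacle: under Definition~\ref{Definition:Agree}, the situation ``$\sigma_k$ determines $s$ but $\sigma_l$ does not'' is already a \emph{disagreement}, not a vacuous agreement. So in the case $c^k \neq c^l$ you only need some observable $s \in c^k \setminus c^l$; then $\sigma_k$ determines $s$ while $\sigma_l$ (which determines only observables in its own context $c^l$) does not, and you are done. You do not need to produce conflicting \emph{values}. This actually makes the first case of the Claim easier than you feared, and is exactly how the original two-qubit proof handles it.
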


\begin{proof}
Basically this is just by plugging in the details of the
languages $\mathcal{L}_n$ into remark
\ref{Remark:GeneralizationLowerBound}. Let us very briefly note
some differences with the proof of
Theorem~\ref{theorem:lowerBoundContextsTimesFreedom}. The main
difference is that the set $\mathcal{L}_2^+$ is nice: once a
string is in there, any extension is as well. However, this
does not impose us to, again define a restricted MAGA by
restricting the domain of $M_0$ to $\mathcal{L}_n^+$. Again, we
consider the $n+1$-tuples consisting of a context with some
values for the\footnote{For each context, we fix some $n$
observables that determine that context as these are not
uniquely defined} $n$ observables that determine this context
and choose some correspondence between these tuples and some
representing sequences $\sigma_i \in \tilde \sigma_n^*$.
Clearly, these $\sigma_i \in \mathcal{L}_n^+$. As the claim
obviously holds also in the general setting, the assumption
that the memory states $s_1, \ldots, s_{2^n \cdot \prod_{k=1}^n
(2^k +1)-1}$ suffice yields together with the PHP to a
contradiction as before.
\end{proof}

As was done in \cite{Cabello2010} and in
\cite{MemoryCostQContextuality}, we can consider the
information density $d_n$ for the corresponding languages
defined as the number of classical bits of memory needed to
simulate a qubit:
\[
d_n:= \frac{\log_2(|S_n|)}{n}.\ \ \ \ \ \ \ (+)
\]
If we apply the lower bound for $S_n$ --the number of memory
states for a MAGA for $\mathcal{L}_n$-- from
Theorem~\ref{Theorem:GeneralizedLowerBound} to $(+)$ we obtain
\[
\begin{array}{lll}
S_n \ \ \ & \geq \ \ \ \ & 2^n \cdot \prod_{k=1}^n (2^k +1)\\
\ & \geq & 2^n \cdot \prod_{k=1}^n (2^k)\\
\ & \geq & 2^n \cdot  (2^{\sum_{k=1}^nk})\\
\ & \geq & 2^n \cdot  (2^{\frac{n(n+1)}{2}})\\
\end{array}
\]
whence $d_n$ is approximated (from below) in the limit by
$\frac{\log_2(2^n \cdot  (2^{\frac{n(n+1)}{2}}))}{n} =
\frac{n+3}{2} \sim \frac{n}{2}$. Thus, the information density
in \adan{this generalization of the Peres-Mermin square} grows
linear in the number of qubits. However, as observed before, any density more
than~1 implies a violation of the Holevo bound.

%%%%%%%%%%%%%%%%%%%%%%%%%%%%%%%%%%%%%%%%%%%%%%%%%%%%%%%%%%%%%%%%%%%%%%


\begin{thebibliography}{restr}

\bibitem{Cabello10}
 A. Cabello,
 Proposed test of macroscopic quantum contextuality,
 Phys. Rev. A {\bf 82}, 032110 (2010).

\bibitem{Cabello11}
 A. Cabello \emph{et al.},
 State-independent quantum contextuality for $n$ qubits,
 arXiv:1102\ldots

\bibitem{Cabello2010}
 A. Cabello,
 The role of bounded memory in the foundations of quantum mechanics.
 Found. Phys., published online 16 September 2010. DOI: 10.1007/s10701-010-9507-2.

\bibitem{Chomsky1956}
 N. Chomsky,
 Three models for the description of language.
 IRE Transactions on Information Theory (2): 113Ð124 (1956).

\bibitem{Flajolet2009}
 P. Flajolet and R. Sedgewick,
 Analytic Combinatorics,
 Cambridge University Press, New York,
 ISBN 978-0-521-89806-5 (2009).

%\bibitem{Holevo}
% A.S. Holevo,
% Bounds for the quantity of information transmitted by a quantum communication
% channel.
% Probl. Peredachi. Inf. {\bf 9}, 3 (1973).

%\bibitem{GisinEtAL2007}
% N. Gisin, A. Methot, and V. Scarani,
% Pseudo-telepathy: input cardinality and Bell-type inequalities.
% Int. J. Quant. Inf. \bf{5}, 525 (2007).


 \bibitem{Holevo}
 A. S. Holevo, Some estimates of the information transmitted by a quantum communication channel. Probl. Inf. Trans.,  (9); 177 (1973).


\bibitem{MemoryCostQContextuality}
 M. Kleinmann, O. G\"uhne, J.R. Portillo, J-\AA. Larsson, and A.
 Cabello,
 Memory cost of quantum contextuality.
 arXiv:1007.3650.

\bibitem{Mermin}
 N. D. Mermin,
 Simple unified form for the major no-hidden-variables theorems.
 Phys. Rev. Lett. {\bf 65}, 3373 (1990).

\bibitem{Peres}
 A. Peres,
 Incompatible results of quantum measurements.
 Phys. Lett. A {\bf 151}, 170 (1990).

\bibitem{Sipser1997}
 M. Sipser,
 Introduction to the Theory of Computation.
 PWS Publishing. ISBN 0-534-94728-X (1997).

\end{thebibliography}
\end{document}